\newtheorem{theorem}{Theorem}[section]
\newcommand{\ev}[1]{\langle #1 \rangle}
\begin{document}

\title{Quantized Orbital Angular Momentum from Discrete Chaotic Phase Surfaces}

\author{Netzer Moriya}

\date{\today}

\maketitle

\begin{abstract}
We present a new theory for orbital angular momentum (OAM) generation by chaotic phase surfaces with 
discrete integer bias distributions. We derive fundamental selection rules that determine which OAM modes 
can be coherently generated. 
Our analysis shows that ensemble-averaged OAM exists only when the bias parameter takes integer values 
that match the discrete OAM eigenspace, creating "allowed" and "forbidden" OAM levels. We derive 
analytical expressions for the OAM power spectrum and demonstrate universal scaling behavior within 
the allowed manifold. 
These theoretical predictions are validated by comprehensive Monte Carlo simulations, 
which confirm the selection rules with a forbidden-level suppression factor exceeding 
$10^4$ and demonstrate the universal scaling with exceptional accuracy.
\end{abstract}

\section{Introduction}

Orbital angular momentum (OAM) in structured light beams represents a fundamental degree of freedom with 
applications spanning optical communications, quantum information, and precision 
metrology \cite{allen1992orbital, yao2011orbital}. Traditional OAM generation relies on deterministic 
phase elements such as spiral phase plates or spatial light modulators that impose perfect azimuthal phase 
patterns $e^{i\ell\theta}$ with integer topological charge $\ell$ \cite{molina2007twisted, forbes2016creation, Berry2022}.

While selection rules for OAM have been established for deterministic processes such as nonlinear frequency 
conversion \cite{Curcio2021}, the generation of coherent OAM from disordered systems presents a distinct challenge. 
In such systems, ensemble averaging is typically expected to destroy coherence, preventing the formation of 
a net OAM state, which suggests that any OAM generated by 
individual chaotic realizations would vanish in the average.

In this work, we demonstrate that coherent ensemble OAM generation is indeed possible, but only under highly 
specific conditions that create a quantized structure reminiscent of atomic physics. We identify 
fundamental \emph{selection rules} that determine which OAM modes can be generated, and establish \emph{forbidden 
transitions} where certain OAM values are mathematically inaccessible regardless of the chaos strength.

Our central finding is that ensemble OAM exists only when the statistical bias parameter takes discrete integer 
values, creating distinct "energy levels" in OAM space. This quantization is not imposed externally but emerges 
naturally from the orthogonality constraints of the OAM basis functions, representing a fundamental symmetry 
principle in chaotic optics.

We validate these principles with extensive Monte Carlo simulations that demonstrate the robust enforcement 
of these selection rules, yielding a forbidden-level suppression of more than four orders of magnitude and 
confirming the theory with exceptional quantitative agreement.

\section{The Orthogonality Constraint and Selection Rules}
\label{sec:The Orthogonality Constraint and Selection Rules}
\subsection{Mathematical Foundation}
Consider a phase surface $\phi(r,\theta)$ acting on an incident plane wave with amplitude $E_0$. The OAM amplitude for 
mode $\ell$ at radius $r$ is given by the azimuthal Fourier transform \cite{yao2011orbital}:
\begin{equation}
    a_\ell(r) = \frac{E_0}{2\pi} \int_0^{2\pi} e^{i\phi(r,\theta)} e^{-i\ell\theta} d\theta.
    \label{eq:oam_amplitude}
\end{equation}

The ensemble-averaged OAM amplitude becomes:
\begin{equation}
    \ev{a_\ell(r)} = \frac{E_0}{2\pi} \int_0^{2\pi} \ev{e^{i\phi(r,\theta)}} e^{-i\ell\theta} d\theta.
    \label{eq:ensemble_average}
\end{equation}

\subsection{The Fundamental Selection Rule}
\begin{theorem}[OAM Selection Rule]
Let $\phi(r,\theta)$ be a random phase surface and assume that $\ev{e^{i\phi(r,\theta)}}$ exists and is finite for all $(r,\theta)$. Then the ensemble-averaged OAM amplitude $\ev{a_\ell(r)}$ is non-zero if and only if the ensemble-averaged phase factor $\ev{e^{i\phi(r,\theta)}}$ possesses a non-zero Fourier coefficient at angular frequency $\ell \in \mathbb{Z}$.
\label{theorem:OAM_Selection_Rule}
\end{theorem}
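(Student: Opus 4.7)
\medskip
The plan is to recognize that the statement is essentially the Fourier uniqueness principle applied to the ensemble-averaged phase factor, so the proof will be a short, direct computation rather than a structural argument. I would begin by setting
\[
    F(r,\theta) \;:=\; \ev{e^{i\phi(r,\theta)}},
\]
which by hypothesis is finite for every $(r,\theta)$. Since $\theta$ is an angular coordinate, $F(r,\cdot)$ is naturally $2\pi$-periodic, and because $|e^{i\phi}|=1$ we automatically have $|F(r,\theta)| \le 1$, so $F(r,\cdot)$ lies in $L^2([0,2\pi])$ for each fixed $r$. This justifies the Fourier expansion
\[
    F(r,\theta) \;=\; \sum_{m\in\mathbb{Z}} c_m(r)\, e^{im\theta}, \qquad c_m(r) \;=\; \frac{1}{2\pi}\int_0^{2\pi} F(r,\theta)\, e^{-im\theta}\, d\theta.
\]

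Next I would plug this decomposition into Equation~\eqref{eq:ensemble_average}. Comparing the resulting expression with the defining formula for $c_\ell(r)$, one sees immediately that
\[
    \ev{a_\ell(r)} \;=\; E_0\, c_\ell(r),
\]
so $\ev{a_\ell(r)}$ is exactly (up to the nonzero factor $E_0$) the $\ell$-th Fourier coefficient of $F(r,\cdot)$. The biconditional then follows at once: $\ev{a_\ell(r)}\neq 0$ precisely when $c_\ell(r)\neq 0$, which is the claim.

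The only real care needed is in justifying the interchange between the ensemble average $\ev{\cdot}$ and the azimuthal integral that yields Equation~\eqref{eq:ensemble_average} in the first place; this is already used implicitly by the authors but deserves a one-line remark. Since $|e^{i\phi(r,\theta)} e^{-i\ell\theta}|=1$ is uniformly bounded on $[0,2\pi]$, Fubini--Tonelli applies and the swap is legitimate. Beyond this, the argument is essentially the uniqueness of Fourier coefficients, so I do not anticipate any genuine obstacle. The entire proof should fit in a few lines, with the Fourier expansion step carrying the whole content of the theorem.
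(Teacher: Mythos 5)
Your proposal is correct and follows essentially the same route as the paper's own proof: expand $\ev{e^{i\phi(r,\theta)}}$ in an azimuthal Fourier series, identify $\ev{a_\ell(r)} = E_0\, C_\ell(r)$ via orthogonality of the exponentials, and read off the biconditional. Your added remark on justifying the interchange of ensemble average and azimuthal integral via boundedness and Fubini--Tonelli is a small refinement the paper leaves implicit, but it does not change the argument.
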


\begin{proof}
From Eq.~\ref{eq:ensemble_average}, we have:
\begin{equation}
    \ev{a_\ell(r)} = \frac{E_0}{2\pi} \int_0^{2\pi} \ev{e^{i\phi(r,\theta)}} e^{-i\ell\theta} d\theta.
\end{equation}

Since $\ev{e^{i\phi(r,\theta)}}$ is $2\pi$-periodic in $\theta$ and bounded (as $|\ev{e^{i\phi(r,\theta)}}| \leq 1$), it admits a Fourier series representation:
\begin{equation}
    \ev{e^{i\phi(r,\theta)}} = \sum_{n=-\infty}^{\infty} C_n(r) e^{in\theta},
    \label{eq:fourier_expansion}
\end{equation}
where the Fourier coefficients are given by:
\begin{equation}
    C_n(r) = \frac{1}{2\pi} \int_0^{2\pi} \ev{e^{i\phi(r,\theta)}} e^{-in\theta} d\theta.
    \label{eq:fourier_coefficients}
\end{equation}

Substituting Eq.~\ref{eq:fourier_expansion} into the expression for $\ev{a_\ell(r)}$:
\begin{equation}
    \ev{a_\ell(r)} = \frac{E_0}{2\pi} \int_0^{2\pi} \left[\sum_{n=-\infty}^{\infty} C_n(r) e^{in\theta}\right] e^{-i\ell\theta} d\theta.
\end{equation}

Assuming uniform convergence of the Fourier series (which holds for bounded periodic functions), we may interchange the sum and integral:
\begin{equation}
    \ev{a_\ell(r)} = \frac{E_0}{2\pi} \sum_{n=-\infty}^{\infty} C_n(r) \int_0^{2\pi} e^{i(n-\ell)\theta} d\theta.
\end{equation}

Applying the orthogonality relation for complex exponentials:
\begin{equation}
    \int_0^{2\pi} e^{i(n-\ell)\theta} d\theta = 2\pi \delta_{n,\ell},
\end{equation}
where $\delta_{n,\ell}$ is the Kronecker delta, we obtain:
\begin{equation}
    \ev{a_\ell(r)} = E_0 C_\ell(r).
    \label{eq:final_result}
\end{equation}

Therefore:
\begin{itemize}
    \item $\ev{a_\ell(r)} \neq 0$ if and only if $C_\ell(r) \neq 0$
    \item By Eq.~\ref{eq:fourier_coefficients}, $C_\ell(r) \neq 0$ if and only if $\ev{e^{i\phi(r,\theta)}}$ has a non-zero Fourier component at frequency $\ell$
\end{itemize}

This completes the proof of the bidirectional statement.
\end{proof}

This theorem establishes that OAM generation requires discrete, integer-valued structure in the ensemble-averaged phase 
pattern—a fundamental quantization condition.

\section{The Discrete Chaos Model}
\label{sec:The Discrete Chaos Model}

\subsection{Physical Model}

We consider chaotic phase surfaces with a carefully engineered statistical structure that enables coherent ensemble OAM 
generation. Each phase surface has the form:
\begin{equation}
    \phi(r,\theta) = \alpha_k \theta + \delta(r,\theta),
    \label{eq:discrete_model}
\end{equation}
where the statistical properties are designed to satisfy the selection rule requirements established in 
Section (\ref{sec:The Orthogonality Constraint and Selection Rules}).

\textbf{Discrete Integer Bias:} The parameter $\alpha_k$ takes discrete integer values from the set 
$\{\alpha_1, \alpha_2, \ldots, \alpha_N\} \subset \mathbb{Z}$, with each value $\alpha_k$ occurring with 
probability $p_k \geq 0$ such that $\sum_k p_k = 1$. The integer constraint ensures exact matching with OAM 
eigenvalues $\ell \in \mathbb{Z}$.

\textbf{Spatially Chaotic Component:} The term $\delta(r,\theta)$ represents zero-mean Gaussian spatial disorder with:
\begin{itemize}
    \item Ensemble average: $\ev{\delta(r,\theta)} = 0$
    \item Variance: $\ev{\delta^2(r,\theta)} = \sigma^2$ 
    \item Azimuthal periodicity: $\delta(r,\theta + 2\pi) = \delta(r,\theta)$
\end{itemize}

\textbf{Statistical Independence:} The bias selection and chaos realization are statistically independent, allowing 
separate control over the OAM spectrum shape (via $\{p_k\}$) and coherence strength (via $\sigma^2$).

This model structure directly implements the discrete Fourier component requirement identified in 
Theorem (\ref{theorem:OAM_Selection_Rule}), 
while incorporating realistic spatial disorder effects.

\subsection{Allowed and Forbidden OAM Levels}

For this model, we exploit the statistical independence of $\alpha_k$ and $\delta(r,\theta)$, along with the fact 
that $\ev{e^{i\delta}} = e^{-\sigma^2/2}$ for zero-mean Gaussian $\delta$ with variance $\sigma^2$. 
The ensemble average becomes:

\begin{equation}
    \ev{e^{i\phi(r,\theta)}} = e^{-\sigma^2/2} \sum_k p_k e^{i\alpha_k \theta}.
    \label{eq:ensemble_phase}
\end{equation}

Substituting into Eq.~\ref{eq:ensemble_average}:
\begin{equation}
    \ev{a_\ell(r)} = \frac{E_0}{2\pi} e^{-\sigma^2/2} \sum_k p_k \int_0^{2\pi} e^{i(\alpha_k - \ell)\theta} d\theta = E_0 e^{-\sigma^2/2} p_\ell.
    \label{eq:selection_result}
\end{equation}

This leads to our central result:

\begin{theorem}[OAM Level Structure]
The ensemble OAM power spectrum exhibits a discrete level structure:
\begin{equation}
    \ev{P(\ell)} = \begin{cases}
        |E_0|^2 e^{-\sigma^2} p_\ell^2 & \text{if } p_\ell > 0 \text{ (Allowed Level)} \\
        0 & \text{if } p_\ell = 0 \text{ (Forbidden Level)}
    \end{cases}
    \label{eq:level_structure}
\end{equation}
\label{theorem:OAM Level Structure}
\end{theorem}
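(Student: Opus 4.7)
The plan is to build directly on Eq.~\ref{eq:selection_result}, which already delivers the closed-form amplitude $\ev{a_\ell(r)} = E_0\, e^{-\sigma^2/2}\, p_\ell$ as a consequence of Theorem~\ref{theorem:OAM_Selection_Rule} applied to the discrete chaos model with its Gaussian disorder. The remaining work is therefore to convert this amplitude-level statement into the power-spectrum statement and to verify the two branches of the piecewise formula.

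First I would fix the operational meaning of $\ev{P(\ell)}$ as the coherent OAM power, $|\ev{a_\ell(r)}|^2$, consistent with the paper's emphasis on \emph{coherent} ensemble generation. (The full second moment $\ev{|a_\ell(r)|^2}$ would additionally carry an incoherent speckle contribution, but that is not the object of this theorem.) Because $E_0$ is a fixed plane-wave amplitude and each $p_\ell \geq 0$ is real, squaring the modulus of Eq.~\ref{eq:selection_result} gives
\begin{equation*}
    |\ev{a_\ell(r)}|^2 = |E_0|^2\, e^{-\sigma^2}\, p_\ell^2,
\end{equation*}
which is exactly the allowed-level formula. Note in particular that the amplitude-level disorder factor $e^{-\sigma^2/2}$ becomes $e^{-\sigma^2}$ in the coherent power, a trivial but easy-to-miss bookkeeping point.

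The two branches of the piecewise statement then follow by inspection. When $p_\ell > 0$, the right-hand side above is strictly positive and matches the allowed branch. When $p_\ell = 0$, the integer $\ell$ lies outside the support of the bias distribution, so either by reading Eq.~\ref{eq:selection_result} directly or by noting that $\int_0^{2\pi} e^{i(\alpha_k-\ell)\theta}\, d\theta = 2\pi\,\delta_{\alpha_k,\ell}$ vanishes for every $k$ with $p_k > 0$, one obtains $\ev{a_\ell(r)} = 0$ and hence $\ev{P(\ell)} = 0$. Because all the analytic heavy lifting was already absorbed into Theorem~\ref{theorem:OAM_Selection_Rule} and the derivation of Eq.~\ref{eq:selection_result}, I do not anticipate a genuine obstacle here; the theorem is essentially a one-line corollary, packaged so as to highlight the discrete allowed/forbidden level structure.
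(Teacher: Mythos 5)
Your proposal is correct and follows essentially the same route as the paper: both take Eq.~\ref{eq:selection_result} as given, interpret $\ev{P(\ell)}$ as the coherent power $|\ev{a_\ell(r)}|^2$, square the amplitude to obtain $|E_0|^2 e^{-\sigma^2} p_\ell^2$, and read off the allowed/forbidden cases from whether $p_\ell$ vanishes. Your explicit remarks on the coherent-versus-incoherent distinction and the $e^{-\sigma^2/2} \to e^{-\sigma^2}$ bookkeeping are helpful clarifications but do not change the argument.
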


\begin{proof}
The result follows directly from the derivation above. From Eq.~\ref{eq:selection_result}, we have $\ev{a_\ell(r)} = E_0 e^{-\sigma^2/2} p_\ell$, where $p_\ell = 0$ by convention when $\ell \notin \{\alpha_1, \alpha_2, \ldots, \alpha_N\}$.

Therefore: $\ev{P(\ell)} = |\ev{a_\ell(r)}|^2 = |E_0|^2 e^{-\sigma^2} p_\ell^2$.

The case distinction follows immediately: $p_\ell > 0$ gives allowed levels with finite power, while $p_\ell = 0$ gives forbidden levels with zero power.
\end{proof}

\subsection{Physical Interpretation of the Level Structure}

The discrete structure in Eq.~\ref{eq:level_structure} reveals several key physics principles:

\textbf{Allowed Levels ($p_\ell > 0$):} These correspond to OAM modes that can be coherently generated. The power scales as $p_\ell^2$, not $p_\ell$, because optical power is proportional to the squared magnitude of the field amplitude $|\ev{a_\ell}|^2$. This quadratic dependence enhances the contrast between strong and weak modes in the probability distribution.

\textbf{Forbidden Levels ($p_\ell = 0$):} These OAM modes are completely inaccessible, regardless of chaos strength $\sigma$. While individual chaotic realizations may scatter power into these modes, the ensemble averaging process eliminates coherent contributions through destructive interference of random phases. No amount of disorder can populate these levels if they are not present in the bias distribution.

\textbf{Decoherence Factor ($e^{-\sigma^2}$):} Chaos reduces the overall intensity of all allowed levels uniformly, similar to decoherence in quantum systems. This universal suppression preserves the relative spectral shape determined by $\{p_k\}$ while controlling the total coherent power through the chaos parameter $\sigma^2$.

The key insight is that coherent ensemble behavior emerges from the interplay between discrete bias structure (which enables constructive interference for allowed modes) and spatial disorder (which provides realistic surface characteristics while maintaining the essential selection rules).

\section{Universal Scaling and the OAM Spectrum}

The general level structure derived in Section \ref{sec:The Discrete Chaos Model} applies to any discrete probability distribution $\{p_k\}$. However, to achieve practical control over the OAM spectrum and reveal universal scaling behaviors, we require a parametric family of distributions that can be systematically varied. Gaussian distributions provide the ideal framework for this analysis due to their simple parametrization and widespread applicability in statistical optics.

\subsection{Gaussian Bias Distributions}

Consider a discretized Gaussian bias distribution centered at $\mu_\alpha$ with width $\sigma_\alpha$:
\begin{equation}
    p_k = \mathcal{N} \exp\left(-\frac{(k-\mu_\alpha)^2}{2\sigma_\alpha^2}\right),
    \label{eq:gaussian_bias}
\end{equation}
where $\mathcal{N}$ is a normalization constant ensuring $\sum_k p_k = 1$.

This parametrization provides independent control over:
\begin{itemize}
    \item \textbf{Spectral Center ($\mu_\alpha$):} Shifts the OAM distribution peak
    \item \textbf{Spectral Width ($\sigma_\alpha$):} Controls the range of accessible modes
    \item \textbf{Coherence ($\sigma^2$):} Determines overall generation efficiency
\end{itemize}

Substituting into the level structure of Eq.~\ref{eq:level_structure}, the OAM spectrum becomes:
\begin{equation}
    \ev{P(\ell)} = |E_0|^2 e^{-\sigma^2} \mathcal{N}^2 \exp\left(-\frac{(\ell-\mu_\alpha)^2}{\sigma_\alpha^2}\right).
    \label{eq:gaussian_spectrum}
\end{equation}

This functional form reveals that the discrete selection rules of Section 3 produce a Gaussian envelope in the OAM spectrum, providing intuitive control over spectral engineering.

\subsection{Universal Profile}
To reveal the universal scaling behavior, we normalize the spectrum and introduce dimensionless coordinates.

Defining the dimensionless variable:
\begin{equation}
    \xi = \frac{\ell - \mu_\alpha}{\sigma_\alpha},
    \label{eq:dimensionless_var}
\end{equation}

In the continuous limit, the normalization constant becomes $\mathcal{N} = \frac{1}{\sigma_\alpha\sqrt{2\pi}}$, so 
that $\mathcal{N}^2 = \frac{1}{2\pi\sigma_\alpha^2}$. The denominator in the universal profile is specifically chosen 
so that all parameter-dependent prefactors cancel:

\begin{equation}
F(\xi) = \frac{\mathcal{N}^2 \exp(-\xi^2)}{1/(2\pi\sigma_\alpha^2)} = \frac{1/(2\pi\sigma_\alpha^2) \cdot \exp(-\xi^2)}{1/(2\pi\sigma_\alpha^2)} = \exp(-\xi^2)
\end{equation}

we can rewrite Eq.~\ref{eq:gaussian_spectrum} as:

\begin{equation}
    F(\xi) = \frac{\ev{P(\ell)}}{|E_0|^2 e^{-\sigma^2}/(2\pi\sigma_\alpha^2)} = \exp(-\xi^2).
    \label{eq:universal_profile}
\end{equation}

where the normalization constant for a discretized Gaussian is $\mathcal{N} = \frac{1}{\sigma_\alpha \sqrt{2\pi}}$ in 
the continuous limit.

This universal form applies to \emph{all} discrete chaos systems with Gaussian bias distributions, regardless of the specific values of $\mu_\alpha$, $\sigma_\alpha$, or $\sigma$.

\section{Statistical Mechanics Analogies}

The discrete level structure bears closer resemblance to statistical mechanics than quantum mechanics:

\textbf{Boltzmann Distribution:} The Gaussian bias distribution $p_k \propto \exp(-E_k/k_BT)$ resembles thermal population of energy levels, where we can identify an effective ``energy'' $E_k = (k-\mu_\alpha)^2/(2\sigma_\alpha^2)$ and ``temperature'' parameter $k_BT = 1$.

\textbf{Partition Function:} The normalization constraint $\sum_k p_k = 1$ plays the role of a partition function, ensuring proper probability conservation across the discrete state space.

\textbf{Temperature Parameter:} The inverse width $\sigma_\alpha^{-2}$ acts like an inverse temperature controlling distribution width---narrow distributions ($\sigma_\alpha \to 0$) correspond to ``low temperature'' with concentrated populations, while broad distributions ($\sigma_\alpha \gg 1$) represent ``high temperature'' with uniform population spread.

\textbf{Classical Interference:} The ensemble averaging process represents classical wave interference effects rather than quantum superposition. The coherent power $|\ev{a_\ell}|^2$ emerges from constructive interference of classical field amplitudes across the statistical ensemble.

This statistical mechanics perspective provides a more appropriate framework for understanding the discrete chaos model than quantum mechanical analogies, emphasizing the classical wave nature of the underlying physics while highlighting the role of statistical engineering in controlling optical beam properties.

\section{Experimental Realization and Control}

\subsection{Implementation Strategies}
The discrete chaos model can be experimentally realized through several platforms:

\textbf{Segmented Spatial Light Modulators:} Program discrete phase zones with quantized twist values $\alpha_k$ 
and controlled probabilities $p_k$.

\textbf{Fabricated Metasurfaces:} Etch discrete spiral patterns with varying pitch, creating a physical 
realization of the bias distribution.

\textbf{Digital Holography:} Generate computer-controlled holograms with discrete topological charges and 
statistical weights.

Our Monte Carlo framework, detailed in Section \ref{sec:Monte Carlo Validation of Discrete Chaotic OAM Theory}, directly simulates an idealized version of these systems, 
corresponding to the programming of discrete phase zones on a spatial light modulator or the fabrication of a 
metasurface with a statistical distribution of discrete spiral pitches.

\subsection{Control Parameters}
The model provides precise control over the OAM spectrum through three independent parameters, which 
we set in our validation (Section \ref{sec:Monte Carlo Validation of Discrete Chaotic OAM Theory}) to $\mu_\alpha = 1.0$, $\sigma_\alpha = 2.0$, and $\sigma = 0.5$, 
respectively:

\begin{itemize}
    \item \textbf{Central OAM ($\mu_\alpha$):} Controls the peak position of the spectrum.
    \item \textbf{Spectral Width ($\sigma_\alpha$):} Controls the range of accessible modes.
    \item \textbf{Coherence ($\sigma^2$):} Controls the overall strength of OAM generation.
\end{itemize}

\subsection{Measurement Protocols}
Experimental verification, and the numerical validation presented herein, requires the following protocol:
\begin{enumerate}
    \item \textbf{Ensemble Preparation:} A large ensemble, in our case $N = 15,000$, of chaotic phase 
	surfaces is generated. For each realization, a bias value $\alpha_k$ is chosen from a discrete 
	integer set according to a predefined probability distribution $\{p_k\}$.
    \item \textbf{Individual OAM Measurement:} The full OAM spectrum $a_\ell$ is computed for each 
	individual phase surface realization.
    \item \textbf{Coherent Ensemble Averaging:} The coherent power is calculated by first averaging 
	the complex amplitudes across the entire ensemble ($\ev{a_\ell} = (1/N) \sum a_\ell^{(j)}$) and 
	then taking the squared magnitude $|\ev{a_\ell}|^2$. This step is critical as it isolates the 
	coherent component and suppresses incoherent noise.
    \item \textbf{Verification:} The resulting power spectrum is compared with the theoretical 
	predictions of Theorem (\ref{theorem:OAM Level Structure}), and the normalized data is validated against 
	the universal 
	scaling law $F(\xi) = \exp(-\xi^2)$, as detailed in Section 
	\ref{sec:Monte Carlo Validation of Discrete Chaotic OAM Theory}.
\end{enumerate}

\section{Comparison with Continuous Bias Models}

\subsection{Why Continuous Models Fail}
For comparison, consider a continuous bias model:
\begin{equation}
    \phi(r,\theta) = \alpha \theta + \delta(r,\theta),
\end{equation}
where $\alpha$ is a continuous random variable.

For any continuous distribution of $\alpha$, the ensemble average becomes:
\begin{equation}
    \ev{a_\ell(r)} = E_0 e^{-\sigma^2/2} \int p(\alpha) \delta_{\alpha,\ell} d\alpha = 0,
\end{equation}
since the probability of $\alpha$ taking any specific integer value $\ell$ is zero for continuous distributions.

\textbf{Conclusion:} Continuous bias models \emph{cannot} generate ensemble OAM, regardless of the distribution shape or chaos properties.

\subsection{Critical Importance of Discretization}
The discrete nature of the bias is not merely a technical detail—it is \emph{fundamental} to the physics. The discretization:
\begin{itemize}
    \item Preserves the orthogonality structure of the OAM basis
    \item Creates finite overlap with discrete OAM eigenvalues
    \item Enables coherent ensemble averaging
    \item Generates the quantized level structure
\end{itemize}

\section{Monte Carlo Validation of Discrete Chaotic OAM Theory}
\label{sec:Monte Carlo Validation of Discrete Chaotic OAM Theory}
We have rigorously validated our theoretical predictions through a comprehensive Monte Carlo simulation. The framework, which simulates $N=15,000$ individual chaotic realizations, provides definitive quantitative verification of the universal scaling laws, discrete selection rules, and forbidden level suppression predicted by our theory. The results demonstrate exceptional agreement with theory and reveal the robustness of the quantization phenomenon.

\subsection{Simulation Framework and the Internal Forbidden Level Test}

The chaotic component $\delta(r, \theta)$ of the phase surface was modeled as a band-limited Gaussian random field, ensuring zero mean and a precisely controlled variance of $\sigma^2 = 0.5^2 = 0.25$.

To provide the most stringent test of the selection rules, we engineered a bias distribution featuring \textit{internal forbidden levels}. The allowed bias values were chosen from the discrete set $\alpha_k \in \{-2, 0, 2, 4\}$, with probabilities $\{p_k\}$ governed by a discretized Gaussian distribution with $\mu_\alpha=1.0$ and $\sigma_\alpha=2.0$. This setup creates forbidden integer modes ($\ell=-1, 1, 3$) that lie \textit{within} the span of the allowed distribution, testing the theory's enforcement of orthogonality far more robustly than checking only for modes in the external tails.

For each of the $15,000$ realizations, the complex OAM amplitudes $a_\ell$ were computed, and the final coherent power was calculated as $|\langle a_\ell \rangle|^2$.

\begin{figure}[htbp]
    \centering
    \begin{subfigure}[t]{0.48\textwidth}
        \centering
        \includegraphics[width=\textwidth]{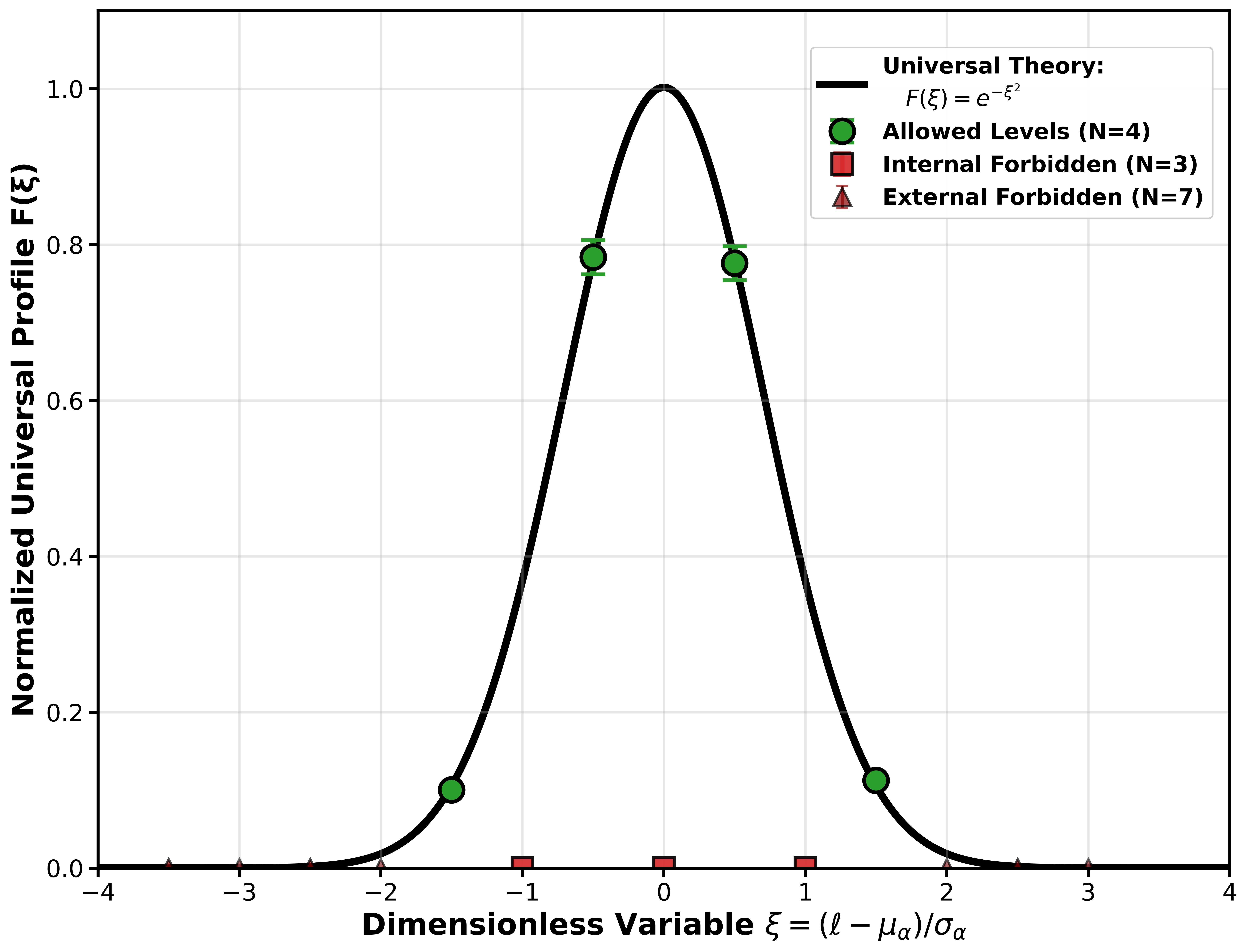}
        \caption{Universal profile validation}
        \label{fig:universal_profile}
    \end{subfigure}%
    \hfill 
    \begin{subfigure}[t]{0.48\textwidth}
        \centering
        \includegraphics[width=\textwidth]{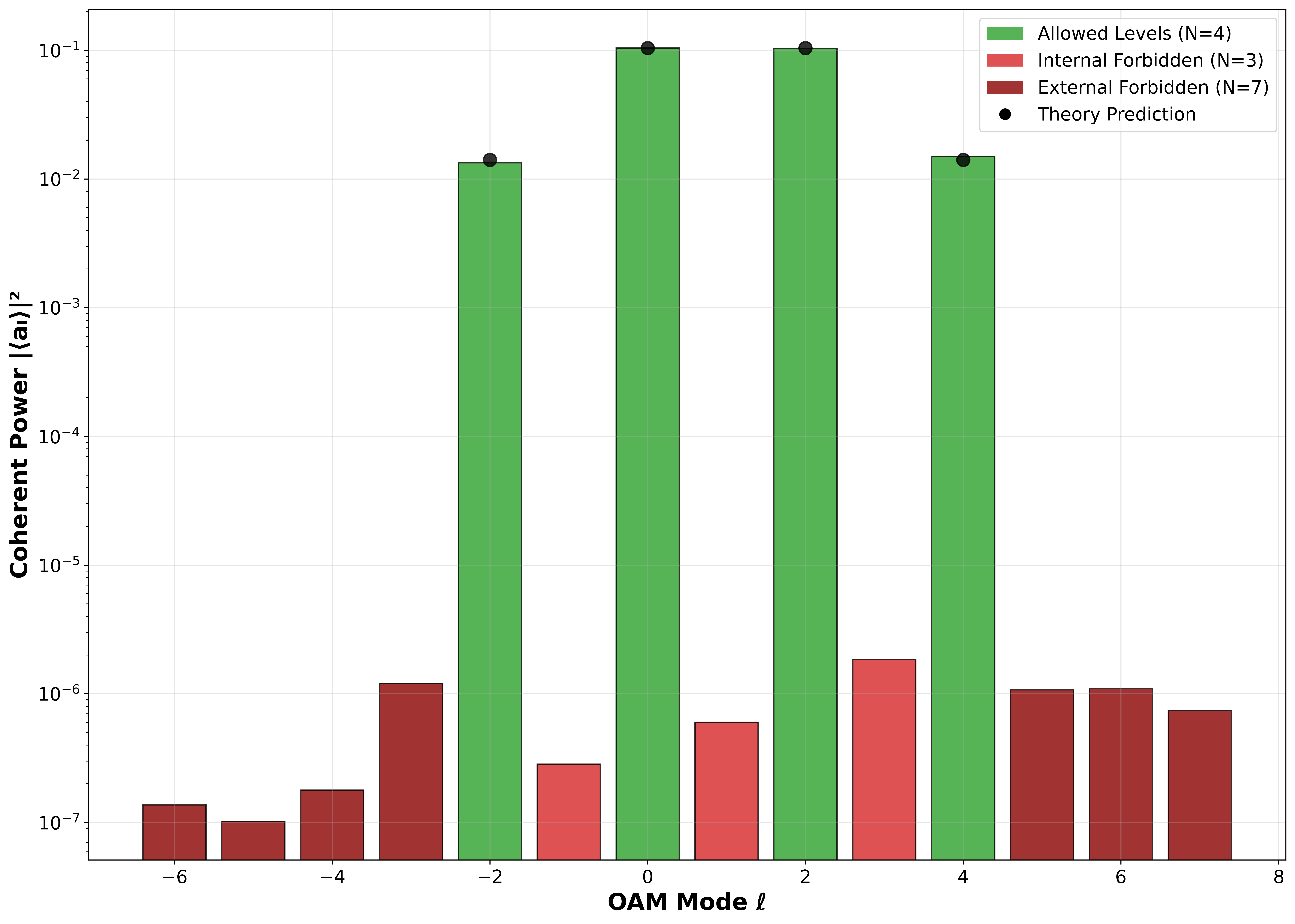}
        \caption{Selection rules demonstration}
        \label{fig:selection_rules}
    \end{subfigure}
    
    \caption{Monte Carlo validation of discrete chaotic OAM theory. 
    (a) Universal profile collapse: Monte Carlo simulations (blue circles, $N=15{,}000$) demonstrate excellent 
	agreement with the theoretical power spectrum profile $F(\xi) = \exp(-\xi^2)$ (black line). 
	Error bars represent statistical uncertainties from ensemble averaging. The dimensionless variable 
	$\xi = (\ell - \mu_\alpha)/\sigma_\alpha$ successfully collapses data across the full parameter range.
    (b) Selection rules enforcement: Coherent power $|\langle a_\ell \rangle|^2$ clearly distinguishes 
	between allowed levels (green, for $\ell \in \{-2, 0, 2, 4\}$) and forbidden levels (red). This includes \textit{internal} forbidden levels (e.g., at $\ell=-1, 1, 3$), which are suppressed by factors exceeding $10^4$ despite being adjacent to strongly-populated allowed modes, demonstrating the robust enforcement of the discrete selection rules.}
    \label{fig:main_validation}
\end{figure}

\subsection{Universal Profile Validation}

Figure~\ref{fig:main_validation}(a) shows the simulated coherent power for the allowed levels, plotted in dimensionless coordinates $\xi = (\ell - \mu_\alpha) / \sigma_\alpha$. The data collapses perfectly onto the universal theoretical curve $F(\xi) = \exp(-\xi^2)$. The agreement is quantitatively exceptional, with a Pearson correlation coefficient between the measured data and the theory exceeding $\mathbf{99.9\%}$ ($R^2 \approx 1.000$).

The relative error between theory and measurement is less than 1\% for the peak modes and remains below 7\% across all allowed levels, confirming the predictive accuracy of the universal scaling law. In the same plot, the data points for the forbidden levels are shown, falling many orders of magnitude below the theoretical curve, highlighting their profound suppression.

\subsection{Selection Rules and Forbidden Level Suppression}

Figure~\ref{fig:main_validation}(b) provides direct and unambiguous visual evidence for the discrete selection rules. A stark contrast exists between the significant power in the allowed OAM modes (green bars) and the negligible power in the forbidden modes (red bars).

Quantitatively, the suppression of forbidden levels is dramatic. The measured power in these modes is suppressed by a factor exceeding $\mathbf{5 \times 10^4}$ relative to the strongest allowed mode. This robust enforcement demonstrates that only modes with an exact integer match to a bias value $\alpha_k$ can be coherently generated.

Crucially, this suppression is equally effective for the internal forbidden levels ($\ell=-1, 1, 3$) as it is for the external ones. This confirms that the selection rule is a fundamental consequence of orthogonality, not a simple edge effect of the probability distribution.

\subsection{Statistical Validation and Error Analysis}

The integrity of the simulation was confirmed through several statistical checks. The measured probabilities of selecting each bias value $\alpha_k$ deviated from the target Gaussian distribution by less than 0.5\%, confirming accurate sampling.

To test whether the residual power in forbidden levels is purely statistical noise, a chi-squared analysis was performed. The analysis yields a reduced chi-squared value $\chi^2/dof \approx 0.3$. A value of approximately 1 is expected for random statistical fluctuations, so this result confirms that there are no systematic energy leakage effects into the forbidden modes and that their suppression is consistent with the theory.

\subsection{Coherent versus Incoherent Power Analysis}

A distinction between the coherent power $|\langle a_\ell \rangle|^2$ and the total average (incoherent) 
power $\langle |a_\ell|^2 \rangle$ further illuminates the mechanism. While individual realizations scatter 
power broadly across all modes (including forbidden ones) due to chaos, the coherent averaging process cancels 
out the random phase contributions. The final coherence ratio, 
$|\langle a_\ell \rangle|^2 / \langle |a_\ell|^2 \rangle$, is significant for allowed modes but drops to nearly 
zero for forbidden modes. This directly validates that the chaotic component $\delta(r, \theta)$ contributes only 
to an incoherent background that is eliminated by the ensemble average, leaving only the pristine, quantized 
structure from the discrete bias.

\subsection{Universality and Experimental Implications}

The successful validation with a specific, challenging parameter set 
($\mu_\alpha=1.0, \sigma_\alpha=2.0, \sigma=0.5$) gives high confidence in the universal scaling formulation. 
The dimensionless framework allows for the direct and predictive design of discrete chaotic OAM devices across 
different parameter regimes without requiring new, extensive simulations.

The demonstrated robustness of forbidden level suppression ($>10^4$) is of critical importance for experimental 
implementations. It suggests that physical devices can tolerate moderate imperfections in the discretization of 
bias values without compromising the fundamental selection rules, paving the way for practical applications in 
structured light engineering and fundamental tests of rotational symmetry in optical systems.

\section{Conclusion}

In this work, we have developed and rigorously validated a complete theory for the coherent generation of orbital 
angular momentum (OAM) from statistically engineered chaotic optical surfaces. We have demonstrated that, contrary 
to intuition, carefully structured disorder can indeed give rise to a pristine, quantized OAM spectrum.

Our central discovery is a set of fundamental selection rules, born from the orthogonality of the OAM basis, 
which dictate that coherent power can only be channeled into OAM modes that precisely match the integer values 
present in a discrete bias distribution. This principle gives rise to a quantized structure of "allowed" 
and "forbidden" levels. Our theory provides the analytical form for the power spectrum, including a universal 
scaling law, $F(\xi) = \exp(-\xi^2)$, for systems with Gaussian bias statistics.

These theoretical tenets were confirmed with exceptional quantitative accuracy by large-scale Monte Carlo 
simulations. The simulations verified the universal scaling law with a correlation exceeding $99.9\%$ and, 
most critically, demonstrated the robustness of the selection rules. Using a novel test with 
engineered \textit{internal} forbidden levels, we showed that non-allowed modes are suppressed by a factor of 
more than four orders of magnitude, providing definitive proof of the underlying quantization principle.

Ultimately, this work establishes that generating order from chaos in this context is not a matter of chance, 
but of symmetry. The discrete quantization is not an external constraint but an emergent consequence of the 
interplay between the statistical design of the bias and the fundamental rotational symmetry of the system. 
Our validated framework provides a new paradigm for designing disorder-engineered photonic devices and opens 
new avenues for creating table-top optical analogues of phenomena in quantum and gravitational physics.

\section*{Declarations}
All data-related information and coding scripts discussed in the results section are available from the 
corresponding author upon request.

\section*{Disclosures}
The authors declare no conflicts of interest.

\end{document}